\let\NAT@parse\undefined
\def\BibTeX{{\rm B\kern-.05em{\sc i\kern-.025em b}\kern-.08em
    T\kern-.1667em\lower.7ex\hbox{E}\kern-.125emX}}
\begin{document}

\title{Formation Maneuver Control Based on the 
Augmented Laplacian Method
}
\author{Xinzhe Zhou$^{\dag}$, Xuyang Wang$^{\dag}$, Xiaoming Duan$^{\dag}$, Yuzhu Bai$^{\ddag}$, and Jianping He$^{\dag}$
\thanks{${\dag}$: the Department of Automation, Shanghai Jiao Tong University, Shanghai, China. E-mail: \{zhou\_xinzhe, 2674789652, xduan, jphe\}@sjtu.edu.cn}
\thanks{${\ddag}$: the College of Aerospace Science and Engineering, National University of Defense Technology, Changsha, China. E-mail: baiyuzhu06@nudt.edu.cn}
}
\maketitle

\newtheorem{theorem}{\textbf{Theorem}}
\newtheorem{lemma}{\textbf{Lemma}}
\theoremstyle{definition}
\newtheorem{definition}{\textbf{Definition}}
\newtheorem{assumption}{\textbf{Assumption}}
\newtheorem{remark}{\textbf{Remark}}

\begin{abstract}

This paper proposes a novel formation maneuver control method for both 2-D and 3-D space, which enables the formation to translate, scale, and rotate with arbitrary orientation. The core innovation is the novel design of weights in the proposed augmented Laplacian matrix. Instead of using scalars, we represent weights as matrices, which are designed based on a specified rotation axis and allow the formation to perform rotation in 3-D space. 
To further improve the flexibility and scalability of the formation, the rotational axis adjustment approach and dynamic agent reconfiguration method are developed, allowing formations to rotate around arbitrary axes in 3-D space and new agents to join the formation. 
Theoretical analysis is provided to show that the proposed approach preserves the original configuration of the formation. The proposed method maintains the advantages of the complex Laplacian-based method, including reduced neighbor requirements and no reliance on generic or convex nominal configurations, while achieving arbitrary orientation rotations via a more simplified implementation. Simulations in both 2-D and 3-D space validate the effectiveness of the proposed method.

\end{abstract}


\section{Introduction}
In recent years, formation control of multi-agent systems has gained significant attention due to its wide range of applications in various fields, such as drone swarms\cite{b1}, AUV formations \cite{b2}, robotic cooperation\cite{b3}, etc. While formation shape control remains essential for coordinated tasks, most scenarios increasingly require dynamic adaptability. This promotes the shift from formation shape control to formation maneuver control, which enables formations to perform continuous shape variation.
Formation maneuver control faces the new challenge of maintaining invariant geometric features during the maneuver process.

Although conventional consensus-based formation control methods are capable of tracking time-varying formations, they usually require agents to have explicit knowledge of target positions\cite{b4, b5}.  
Nevertheless, in most cases, agents are mainly restricted to gathering information from their neighboring agents rather than having access to global information \cite{b6}.
To deal with this issue, displacement-based\cite{b7,b8}, distance-based\cite{b9}, and bearing-based methods\cite{b10,b11} utilize the information from neighboring agents. They define target formations through invariant constraints on inter-agent displacements, distances, and bearings.
Ren et al. \cite{b7, b8} proposed displacement-based method which successfully tracks time-varying formation translation yet encounters difficulties when dealing with time-varying scaling and rotation.
De Marina et al. \cite{b9} put forward a distance-based method that is capable of handling time-varying formation translation and rotation while being ineffective in managing scaling.
Zhao et al. \cite{b10, b11} introduced bearing-based method which can accommodate time-varying formation translation and scaling while being unable to execute formation rotation.

Recent researches have improved formation maneuverability by introducing advanced constraints, such as similarity-based rigidity constraints, complex Laplacian, equilibrium stress matrices, and barycentric coordinates. 
The formation maneuver control method\cite{b12} based on similarity-based rigidity constraints effectively enables formation shape variation under translation, rotation, and scaling, but the inherent rigidity may pose challenges in terms of flexibility for reconfiguration.
The complex Laplacian approach\cite{b13,b14,b15,b16} enables formation shape variation under translation, rotation, and scaling. 
Because of the property of complex numbers, this method is restricted to 2-D space.
Xu et al. \cite{b17} extended the complex Laplacian-based method to 3-D space by introducing an additional dimension. To achieve rotation with any orientation in 3-D space, the method requires constructing three constant nominal configurations, which leads to high complexity and redundancy.
In contrast, formation maneuver control methods based on equilibrium stress matrices\cite{b18,b19,b20,b21} and barycentric coordinates\cite{b22,b23} support any-dimensional formation shape variation. However, these methods still have the following two issues. First, followers require at least $d+1$ neighbors in $d$-dimensional space. Second, nominal configurations must satisfy generic or rigid conditions to construct an invertible follower-follower matrix, which ensures unique and localizable formations.

In this paper, we present a novel formation maneuver control method based on the \textit{augmented Laplacian matrix}. The proposed approach enables formations to simultaneously execute translational, rotational, and scaling maneuvers in both 2-D and 3-D space. Compared with existing methods, our main contributions are as follows:
\begin{itemize}
\item  Instead of scalar values, the weights of inter-agent constraints are represented as matrices, which enables the formation to rotate about the given rotation axis.
\item To achieve arbitrary orientation rotations of the formation, we propose a dynamic rotation axis adjustment approach. We also develop a dynamic agent reconfiguration method that allows seamless integration of new agents into the formation. It is proved that both methods preserve the invariance of formation's configuration.
\item We establish a theoretical connection between the proposed method and existing complex Laplacian method, thereby showing that the complex Laplacian method is actually a special case of our method.
\end{itemize}




The rest of this paper is organized as follows. Section \hyperref[Section II]{II} presents some preliminaries and the objectives of this paper. Section \hyperref[Section III]{III} proposes the augmented Laplacian matrix, and control protocols for both leaders and followers. Further, we give methods for adjusting the rotation axis of formation and dynamic agent reconfiguration. Section \hyperref[Section IV]{IV} reveals the relationship between the proposed method and the 2-D complex Laplacian method. Simulation results are presented in Section~\hyperref[Section V]{V} to validate the effectiveness of our proposed method. Lastly, the conclusion is given in Section \hyperref[Section VI]{VI}.

\section{Preliminaries and Problem Formulation}
\label{Section II}
\subsection{Notations}
Let $\mathbb{R}$ be the set of real numbers, and $\mathbb{C}$ be the set of complex numbers. $A^{\top}$ is the transpose of the real matrix $ A \in \mathbb{R}^{m \times n}$.
Consider a multi-agent system with $n$ agents in $\mathbb{R}^d$, where $d \in \{2,3\}$ and $n \geq 3$. Let $p_i \in \mathbb{R}^d$ be the position of the $i$-th agent, $r_i$ be the nominal position and $p^*_i(t)$ be the target position at time $t$. Denote the interaction graph among the agents as $\mathcal{G} = (\mathcal{V}, \mathcal{E})$, where $ \mathcal{V} = \{1, \ldots, n\} $ and $\mathcal{E} \subseteq \mathcal{V} \times \mathcal{V} $ are the sets of agents and edges, respectively. This paper only considers \textit{2-rooted} graphs, i.e., there is a subset of two agents, and every other agent is 2-reachable from this subset \cite{b13}.

Denote $(\mathcal{G}, p)$ as a formation of $n$ agents, where $\mathcal{G}$ is the interaction graph and $p = [p_1^\top, \ldots, p_n^\top]^\top \in \mathbb{R}^{nd}$ is the configuration of the formation.
Similarly, let $(\mathcal{G}, p^*(t))$ and $(\mathcal{G}, r)$ be the target formation and nominal formation, where $p^*(t) = [p_1^*(t)^\top, \ldots, p_n^*(t)^\top]^\top$ and $r = [r_1^\top, \ldots, r_n^\top]^\top$.
A nominal configuration $r$ is called generic if the coordinates $r_1,\ldots,r_n$ do not satisfy any nontrivial algebraic equation with integer coefficients \cite{b17}. The centroid $r_c \in \mathbb{R}^d$ of the nominal formation $(\mathcal{G}, r)$ is defined as
\begin{equation}
    \label{center of r}
    r_c = \frac{1}{n} \sum_{i \in \mathcal{V}} r_i.
\end{equation}

Let $ \mathcal{V}_f = \{1, \ldots, n_f\} $ and $\mathcal{V}_l = \mathcal{V} \setminus \mathcal{V}_f$ be the sets of followers and leaders.
Let $p_f = \begin{bmatrix} p_1^\top, \ldots, p_{n_f}^\top \end{bmatrix}^{\top} $ and $p_l = \begin{bmatrix} p_{n_f + 1}^{\top}, \ldots, p_n^{\top} \end{bmatrix}^{\top} $ be the configurations of followers and leaders, respectively.
Similarly, let $p_f^*(t)$, $p_l^*(t)$ and $r_f$, $r_l$ be the target configuration and nominal configuration.

 Let $\zeta = [\zeta_x, \zeta_y, \zeta_z]^\top \in \mathbb{R}^3$ be the axis of rotation in 3-D space, and $\zeta^{\times}$ be the skew-symmetric matrix of vector $\zeta$. Let $\mathcal{N}_i \triangleq \{ j \in \mathcal{V} : (i, j) \in \mathcal{E} \}$ be the set of neighbors of agent $i$. Let $I_d \in \mathbb{R}^{d \times d}$ be an identity matrix of dimension $d \times d$, and $\mathbf{1}_d$ be the all-one vector of dimension $d$. Let $\otimes$ be the Kronecker product. 

\subsection{Invariant Constraints}
A main challenge in formation maneuver control lies in the design of constraints, which maintain invariance under formation transformations. Specifically, the constraints are realized through the design of edge weights $w_{ij}$ in the formation's interaction graph.
For instance, the constraint proposed in \cite{b11} maintains translation and scaling invariance through bearing-based weights $w_{ij}$, while the constraint in \cite{b18} extends the invariance to affine transformations. 

A general form of constraints is formulated as
\begin{equation}
    \label{general constraints}
    \sum_{j\in \mathcal{N}_i} w_{ij}(p_j - p_i) = 0,
\end{equation}
where $w_{ij}$ are the weights on the edges $(i, j)$. 
Then, the \textit{constraint matrix} $W$ is defined as
\begin{equation}
W(i, j) = 
\begin{cases} 
w_{ij} & \text{if } i \neq j \text{ and } j \in \mathcal{N}_i \\\
0 & \text{if } i \neq j \text{ and } j \notin \mathcal{N}_i \\\
-\sum_{k \in \mathcal{N}_i} w_{ik} & \text{if } i = j
\end{cases}
\label{laplacian matrix}
\end{equation}
The shape of the matrix \(W\) differs among different methods. For instance, in the complex Laplacian methods \cite{b13,b14,b15,b16}, \(W \in \mathbb{C}^{n\times n}\), while in the methods based on equilibrium stress matrices \cite{b18,b19,b20,b21}, \(W \in \mathbb{R}^{nd\times nd}\). In general, \eqref{general constraints} can be written in the matrix form as
\begin{equation}
    \label{general constraints in matrix}
    Wp = 0.
\end{equation}


Then, given a configuration $p$, an immediate question is how to generate weights $w_{ij}$ under constraints \eqref{general constraints}. Specifically, the work \cite{b13} addressed this by decomposing the problem into cases where agents have exactly two neighbors and those with more than two neighbors. First, for agent $i$ with two neighbors, $j, k$, the weights are designed as
\begin{equation}
    \label{two neighbor construct}
    w_{ij}(p_j - p_i)+w_{ik}(p_k - p_i)=0.
\end{equation}
Second, for agent $i$ with more than two neighbors, $i_1, \ldots, i_m$ where $m>2$, the weights are constructed via 
\begin{subequations}
\renewcommand{\theequation}{\theparentequation-\alph{equation}}
\label{more than two neighbor construct}
\begin{empheq}[left=\empheqlbrace]{align}
   &\xi_h = [0, 0, \ldots, w_{ii_j}^{(h)}, 0, \ldots, w_{ii_k}^{(h)}, \ldots, 0], \label{(9)}\\
&[w_{ii_1}, w_{ii_2}, \ldots, w_{ii_m}]=\sum_{h = 1}^{C_{m}^2}\xi_h,
\end{empheq}
\end{subequations}
which, in essence, converts the problem into combinations of two-neighbor cases.

The constraint matrix, \(W\), is partitioned into follower-leader blocks as
\begin{equation}
    \label{(5)}
    W = 
    \begin{bmatrix}
    W_{ff} & W_{fl} \\
    W_{lf} & W_{ll}
    \end{bmatrix},
\end{equation}
where \( W_{ff} \in \mathbb{R}^{n_fd \times n_fd} \) and \( W_{ll} \in \mathbb{R}^{n_ld \times n_ld} \) describe the follower-follower and leader-leader blocks, respectively. The cross blocks $W_{fl} \in \mathbb{R}^{n_fd \times n_ld}$ and $W_{lf} \in \mathbb{R}^{n_ld \times n_fd}$ describe leader-follower interactions.

\begin{definition}
    A nominal formation $(\mathcal{G}, r)$ is termed \textit{localizable} if the follower-follower matrix $W_{ff}$ is non-singular. 
\end{definition}

When the formation configuration $p$ is at nominal configuration $r$, \eqref{general constraints in matrix} decomposes into follower-leader blocks as
\begin{equation}
    \begin{bmatrix}
    W_{ff} & W_{fl} \\
    W_{lf} & W_{ll}
    \end{bmatrix}
    \begin{bmatrix}
    r_f \\
    r_l
     \end{bmatrix}
    =0,
\end{equation}  
which further implies $W_{ff}r_f + W_{fl}r_l = 0$. Given the leaders' nominal configuration $r_l$, non-singularity of the matrix $W_{ff}$ guarantees the existence of a unique solution for the followers' configuration $r_f$. Specifically, 
\begin{equation}
    \label{uniqueness of follower nominal configuration}
     r_f = -W_{ff}^{-1}W_{fl}r_l.
\end{equation}

\begin{assumption}
    \label{assumption 1}
     The nominal formation $(\mathcal{G}, r)$ is localizable, i.e., the matrix $W_{ff}$ is non-singular. 
\end{assumption}
In Sec. \hyperref[Section III]{III}, we will show the rationality of this assumption.

\subsection{Target Formation and Objectives}
The translation and scaling parameters are denoted by $T(t) \in \mathbb{R}^d$ and $k(t) \in \mathbb{R}$. In addition, denote $R_\zeta(t) \in \mathbb{R}^{d \times d}$ as the rotation matrix about axis $\zeta$. Let $\theta$ be the rotation angle.
By Rodrigues' rotation formula, the rotation matrix $R_\zeta$ is expressed as 
\begin{equation}
    \label{rotation matrix}
    R_\zeta = I_d+(\sin\theta)\zeta^{\times}+(1 - \cos\theta)(\zeta^{\times})^2.
\end{equation}

Subsequently, according to \cite{b17}, the time-varying target configuration $p^*(t)$ is defined similarly based on the nominal configuration $r$ as
\begin{equation}
    \label{all target configuration}
    p^*(t) = \underbrace{\mathbf{1}_n \otimes (r_c + T(t))}_{\alpha^*(t)} + \underbrace{k(t)( I_n \otimes R_\zeta(t))(r - \mathbf{1}_n \otimes r_c)}_{\beta^*(t)},
\end{equation}
where $\alpha^*(t)$ is the translation term and $\beta^*(t)$ is the scaling and rotation term.

\begin{definition}
    \label{def of similar formation}
    For a nominal formation $(\mathcal{G},r)$, a target formation $(\mathcal{G},p^*)$ is termed \textit{similar} to $(\mathcal{G},r)$, if there is a specific rotation axis \( \zeta \), such that the constraint matrix \(W\) generated by the nominal formation $(\mathcal{G},r)$ satisfies
    \begin{equation}
        \label{def of similar}
        Wp^* = 0.
    \end{equation}
\end{definition}
During formation maneuvers, the time-varying target formation $(\mathcal{G}, p^*(t))$ should remain \textit{similar} to $(\mathcal{G},r)$ for all time, i.e.,  $p^*(t) \in \ker(W) \text{ for all } t$.

\begin{assumption}
    \label{assumption 2}
    Only leaders get access to the target positions, which means only leaders know the time-varying parameters $T(t), k(t), R_\zeta(t)$ in \eqref{all target configuration}. Followers only know the relative positions and velocities of their neighbors. 
\end{assumption}

Under \hyperref[assumption 1]{Assumption 1} and \hyperref[assumption 2]{Assumption 2}, and leveraging \eqref{all target configuration} and \eqref{def of similar}, the target configurations of the leader and follower groups are written as
\[
\begin{cases}
    p_l^*=&\mathbf{1}_{n_l}\otimes(r_c + T(t))\\&+k(t)(I_{n_l}\otimes R_\zeta(t))(r_l - \mathbf{1}_{n_l}\otimes r_c),\\
    p_f^*=&-W_{ff}^{-1}W_{fl}p_l^*.
\end{cases}
\]

Then, the objectives of this paper are as follows.
\begin{itemize}
    \item Design a constraint matrix $W$ to ensure that the target formation configuration $p^*(t)$ satisfies \eqref{def of similar} for all time during formation maneuvers.
    
    \item Design control protocols for the leaders and followers such that
    \begin{subequations}
    \renewcommand{\theequation}{\theparentequation-\alph{equation}}
    \begin{empheq}[left=\empheqlbrace]{align}
    &\lim_{t \to \infty} (p_l(t) - p_l^*(t)) = 0, \label{control objective for leaders}\\
    &\lim_{t \to \infty} (p_f(t) + W_{ff}^{-1} W_{fl}p_l(t)) = 0. \label{control objective for followers}
    \end{empheq}
    \end{subequations}
    
    \item Propose a method that enables the formation to adjust its rotation axis at any time to rotate with any orientations in 3-D space.
    
    \item Develop a method enabling the real-time and seamless integration of new agents into the formation.
    
\end{itemize}

\section{Methodology}
\label{Section III}
\subsection{Augmented Laplacian Matrix}
In 3-D formation maneuver control, a fundamental challenge is achieving arbitrary formation rotations. Existing methods often lack systematic and straightforward solutions for this problem. To address this gap, we propose the \textit{augmented Laplacian matrix}, where weights $w_{ij}$ are matrices in $\mathbb{R}^{d\times d}$ rather than scalars. 
\begin{definition}
    Given rotation axis $\zeta$, the \textit{augmented Laplacian matrix} $W$ is expressed in the form of \eqref{laplacian matrix}, and each weight $w_{ij}$ is defined by
    \begin{equation}
        \label{form of wij}
        w_{ij} = a_{ij} I_d + b_{ij} \zeta\zeta^\top + c_{ij} \zeta^{\times},
    \end{equation}
     where coefficients \( a_{ij}, b_{ij}, c_{ij} \in \mathbb{R} \).
\end{definition}

As shown in \eqref{two neighbor construct} and \eqref{more than two neighbor construct}, regardless of the number of neighbors agent \(i\) has, the problem can be split into combinations of two-neighbor cases. Therefore, assuming that agents \(j, k\) are the neighbors of agent \(i\), by \eqref{general constraints}, \eqref{two neighbor construct} and \eqref{more than two neighbor construct}, we have
\begin{align}
\label{existence}
    &(a_{ij} I_d + b_{ij} \zeta\zeta^\top + c_{ij} \zeta^{\times})(p_j - p_i) \nonumber\\ &+ (a_{ik} I_d + b_{ik} \zeta\zeta^\top + c_{ik} \zeta^{\times})(p_k - p_i) = 0.
\end{align}
There are six unknowns (\(a_{ij}, b_{ij}, c_{ij}, a_{ik}, b_{ik}, c_{ik}\)) and three equations in~\eqref{existence}. By the rank-nullity theorem, a non-trivial solution exists if and only if the coefficient matrix has rank less than three. Under \hyperref[assumption 1]{Assumption 1}, feasible weight matrices \(w_{ij}\) can always be constructed to satisfy \eqref{general constraints}.

\begin{theorem}
    \label{commutative law}
    Given rotation axis $\zeta$, the weight $w$ defined in~\eqref{form of wij} commutes with the rotation matrix $R_\zeta$, i.e., 
    \[
        wR_\zeta = R_\zeta w.
    \]
\end{theorem}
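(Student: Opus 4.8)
The plan is to exhibit both the weight $w = a I_d + b\,\zeta\zeta^\top + c\,\zeta^\times$ and the rotation matrix $R_\zeta$ as polynomials in the single matrix $\zeta^\times$, and then invoke the elementary fact that any two polynomials in a fixed matrix commute.

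First I would use the standard skew-symmetric identity $(\zeta^\times)^2 = \zeta\zeta^\top - (\zeta^\top\zeta) I_d$, which holds for every $\zeta \in \mathbb{R}^3$, to eliminate the rank-one term: rearranging gives $\zeta\zeta^\top = (\zeta^\top\zeta) I_d + (\zeta^\times)^2$, so that \eqref{form of wij} becomes
\[
    w = \bigl(a + b\,\zeta^\top\zeta\bigr) I_d + c\,\zeta^\times + b\,(\zeta^\times)^2 ,
\]
a quadratic polynomial in $\zeta^\times$. Consistently with the use of Rodrigues' formula, taking $\zeta$ to be a unit axis ($\zeta^\top\zeta = 1$) this reads $w = (a+b) I_d + c\,\zeta^\times + b\,(\zeta^\times)^2$. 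On the other side, \eqref{rotation matrix} already displays $R_\zeta = I_d + (\sin\theta)\,\zeta^\times + (1-\cos\theta)\,(\zeta^\times)^2$ as a quadratic polynomial in $\zeta^\times$.

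Then the conclusion is immediate: since $\zeta^\times$ commutes with itself and with $I_d$, all powers $(\zeta^\times)^p$ pairwise commute, hence so do $w$ and $R_\zeta$, i.e.\ $w R_\zeta = R_\zeta w$. If a hands-on verification is preferred, one can instead expand both products $w R_\zeta$ and $R_\zeta w$ directly and reduce the resulting terms $(\zeta^\times)^3$ and $(\zeta^\times)^4$ via $(\zeta^\times)^3 = -\zeta^\times$ and $(\zeta^\times)^4 = -(\zeta^\times)^2$ (again for a unit axis), after which the two expansions agree term by term. I do not expect any genuine obstacle here: the only mildly non-obvious step is recognizing the identity that turns $\zeta\zeta^\top$ into a polynomial in $\zeta^\times$, and the single point to keep straight is that $\zeta$ must be normalized in the same way it is for \eqref{rotation matrix} to represent a rotation.
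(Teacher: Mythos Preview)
Your argument is correct and is cleaner than the paper's. The paper proves the theorem by expanding the commutator $wR_\zeta - R_\zeta w$ directly from \eqref{rotation matrix} and \eqref{form of wij}; the only terms that do not cancel trivially are those mixing $\zeta\zeta^\top$ with $\zeta^\times$ or $(\zeta^\times)^2$, and these are annihilated via the cross-product identities $\zeta^\times\zeta = 0$ and $\zeta^\top\zeta^\times = 0$, which give $\zeta\zeta^\top\zeta^\times = \zeta^\times\zeta\zeta^\top = 0$ and $\zeta\zeta^\top(\zeta^\times)^2 = (\zeta^\times)^2\zeta\zeta^\top = 0$. Your route bypasses that term-by-term bookkeeping by first absorbing $\zeta\zeta^\top$ into the polynomial algebra generated by $\zeta^\times$ through the identity $(\zeta^\times)^2 = \zeta\zeta^\top - (\zeta^\top\zeta) I_d$, after which both $w$ and $R_\zeta$ are polynomials in the single matrix $\zeta^\times$ and commutativity is automatic. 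This is arguably more conceptual and also makes clear why the particular form \eqref{form of wij} was chosen. One small remark: your argument does not actually require the normalization $\zeta^\top\zeta = 1$; for arbitrary $\zeta$ the substitution still yields $w = (a + b\,\zeta^\top\zeta) I_d + c\,\zeta^\times + b\,(\zeta^\times)^2$, which is again a polynomial in $\zeta^\times$, so the commutation with $R_\zeta$ follows in exactly the same way.
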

\begin{proof}
    According to \eqref{rotation matrix} and \eqref{form of wij}, we have
    \begin{align*}
        wR_\zeta - R_\zeta w =& b\sin\theta (\zeta\zeta^{\top} \zeta^{\times}-\zeta^{\times}\zeta\zeta^{\top}) + \\ &a(1-\cos\theta)(\zeta\zeta^{\top}(\zeta^{\times})^2-(\zeta^{\times})^2\zeta\zeta^{\top}),
    \end{align*}
    Utilizing the property of skew-symmetric matrices, we have 
    \[\zeta\zeta^{\top}\zeta^{\times} = \zeta^{\times}\zeta\zeta^{\top} = 0, \quad \zeta\zeta^{\top}(\zeta^{\times})^2 = (\zeta^{\times})^2\zeta\zeta^{\top} = 0.\]
    Thus, we have 
    \[wR_\zeta - R_\zeta w = 0 \implies wR_\zeta = R_\zeta w.\]
\end{proof}

Using \hyperref[commutative law]{Theorem 1}, the weight design \eqref{form of wij} is further proved to satisfy~\eqref{def of similar}.

\begin{theorem}
    \label{theorem for similar}
    The time-varying target formation $(\mathcal{G}, p^*(t))$ in \eqref{all target configuration} is similar to its nominal formation $(\mathcal{G}, r)$ with weights $w_{ij}$ in the form of \eqref{form of wij}.
\end{theorem}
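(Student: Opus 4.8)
The plan is to establish the single identity $Wp^*(t)=0$ for every $t$, which is exactly condition \eqref{def of similar} for the axis $\zeta$ used to build $W$. I would exploit the linearity of $W$ and split $p^*(t)$ from \eqref{all target configuration} into the translation term $\alpha^*(t)=\mathbf 1_n\otimes(r_c+T(t))$ and the scaling--rotation term $\beta^*(t)=k(t)\,(I_n\otimes R_\zeta(t))(r-\mathbf 1_n\otimes r_c)$, and show $W\alpha^*(t)=0$ and $W\beta^*(t)=0$ separately.

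First I would record two structural facts about the augmented Laplacian. \emph{(i)} Every block-row of $W$ has vanishing block-sum: from \eqref{laplacian matrix} the $i$-th block-row has entries $w_{ij}$ on neighbours and $-\sum_{k\in\mathcal N_i}w_{ik}$ on the diagonal, so $\sum_j W(i,j)=0\in\mathbb R^{d\times d}$, hence $W(\mathbf 1_n\otimes v)=0$ for every $v\in\mathbb R^d$; in particular $W\alpha^*(t)=0$ and $W(\mathbf 1_n\otimes r_c)=0$. \emph{(ii)} $Wr=0$: by the construction in \eqref{existence} (equivalently \eqref{two neighbor construct}--\eqref{more than two neighbor construct}) the coefficients $a_{ij},b_{ij},c_{ij}$ are chosen so that $\sum_{j\in\mathcal N_i}w_{ij}(r_j-r_i)=0$ at the nominal configuration $r$, which is precisely the $i$-th block of $Wr$. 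Combining \emph{(i)} and \emph{(ii)} gives $W(r-\mathbf 1_n\otimes r_c)=0$.

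Second, I would lift \hyperref[commutative law]{Theorem 1} from a single weight to the full matrix: for the off-diagonal blocks $[W(I_n\otimes R_\zeta)]_{ij}=w_{ij}R_\zeta=R_\zeta w_{ij}=[(I_n\otimes R_\zeta)W]_{ij}$ by \hyperref[commutative law]{Theorem 1}, while the diagonal blocks are real linear combinations of the same weights and hence also commute with $R_\zeta$; therefore $W(I_n\otimes R_\zeta(t))=(I_n\otimes R_\zeta(t))\,W$ for each fixed $t$. Consequently
\[
W\beta^*(t)=k(t)\,W(I_n\otimes R_\zeta(t))(r-\mathbf 1_n\otimes r_c)=k(t)\,(I_n\otimes R_\zeta(t))\,W(r-\mathbf 1_n\otimes r_c)=0 .
\]
Adding the two contributions yields $Wp^*(t)=W\alpha^*(t)+W\beta^*(t)=0$ for all $t$, which is \eqref{def of similar}, so $(\mathcal G,p^*(t))$ is similar to $(\mathcal G,r)$.

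The computation itself is short, so the step I would be most careful about is the bookkeeping that makes \emph{(ii)} and the matrix form of \hyperref[commutative law]{Theorem 1} legitimately available: $W$ must be the augmented Laplacian generated \emph{from the nominal formation} $(\mathcal G,r)$ using the \emph{same} axis $\zeta$ that appears in $R_\zeta(t)$. A minor point worth a sentence is that although $R_\zeta(t)$ is time-varying, for each fixed $t$ it is a rotation about the fixed axis $\zeta$, so \hyperref[commutative law]{Theorem 1} applies pointwise in $t$ and no regularity of $R_\zeta(\cdot)$ is needed.
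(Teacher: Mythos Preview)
Your argument is correct and follows essentially the same route as the paper: split $p^*(t)=\alpha^*(t)+\beta^*(t)$, kill $W\alpha^*(t)$ via the zero block-row-sum property, and kill $W\beta^*(t)$ by invoking \hyperref[commutative law]{Theorem 1} to pull $R_\zeta(t)$ through $W$ and then using $W(r-\mathbf 1_n\otimes r_c)=0$. The only cosmetic difference is that the paper works block-row by block-row while you package the commutation as the single identity $W(I_n\otimes R_\zeta)=(I_n\otimes R_\zeta)W$; these are equivalent formulations of the same computation.
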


\begin{proof}
    For the nominal formation $(\mathcal{G}, r)$, the constraints \eqref{general constraints} are expressed as
    \begin{equation}
    \label{general constraints for nominal constraints}
        \sum_{j\in \mathcal{N}_i} w_{ij}(r_j - r_i) = \sum_{j} w_{ij} r_j = 0.
    \end{equation}
   
    First, for the translation term in \eqref{all target configuration}, we have 
    \[
    W \alpha^*(t) = \begin{bmatrix} \sum_{j} w_{1j} (r_c + T(t)) \\ \vdots \\ \sum_{j} w_{nj} (r_c + T(t)) \end{bmatrix} = 0.
    \]   
    Second, for the scaling and rotation term , we have
    \begin{align*}
        W \beta^*(t) = 
        \begin{bmatrix} 
        \sum_{j} w_{1j} k(t)R_\zeta(t)(r_j - r_c) \\ \vdots \\ 
        \sum_{j} w_{nj} k(t)R_\zeta(t)(r_j - r_c) 
        \end{bmatrix}.
    \end{align*}
    For each row, by \hyperref[commutative law]{Theorem 1}, $w$ commutes with $R_\zeta$  
    \begin{align*}
        &\sum_{j} w_{ij} k(t)R_\zeta(t)(r_j - r_c)\\ 
        &= k(t)R_\zeta(t) \sum_{j} w_{ij}  (r_j - r_c) = 0.
    \end{align*}
    Thus, we obtain
    \begin{align*}
        W [\alpha^*(t) + \beta^*(t)] = W p^*(t) = 0.
    \end{align*}
   
Therefore, given the augmented Laplacian matrix $W$ with entries in the form of \eqref{form of wij}, the target formation $(\mathcal{G}, p^*(t))$ is similar to the nominal formation $(\mathcal{G}, r)$ for all time.
    
\end{proof}

\subsection{Control Protocols}

Consider that each agent is modeled as a single-integrator
\begin{equation}
    \label{agent model}
    \dot{p}_i(t) = v_i(t),
\end{equation}
where $v_i(t)$ is the control protocol to be designed. Subsequently, we will present the specific forms for leaders and followers respectively.

\textbf{Control Protocol for the Leaders}. According to \cite{b17}, a feasible control protocol for the leaders is obtained as
\begin{equation}
    \label{control protocol for leaders}
    v_i = \begin{bmatrix}
        -tanh(x_i - x_i^*) + \Dot{x}_i^* \\
        -tanh(y_i - y_i^*) + \Dot{y}_i^* \\
        -tanh(z_i - z_i^*) + \Dot{z}_i^*
    \end{bmatrix},
    i \in \mathcal{V}_l,
\end{equation}
where \( v_i \in \mathbb{R}^d \) is the control input, \(\tanh(\cdot)\) is the hyperbolic tangent function, and \(\dot{x}_i^*\), \(\dot{y}_i^*\), \(\dot{z}_i^*\) are the target velocities.

\begin{lemma}[Lemma 3 \cite{b17}]
    The leaders group achieves their control objective \eqref{control objective for leaders} under control protocol \eqref{control protocol for leaders}.
\end{lemma}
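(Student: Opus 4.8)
The plan is to reduce this tracking problem to a decoupled scalar stability analysis. First I would define the leader tracking error $e_i(t) = p_i(t) - p_i^*(t)$ for each $i \in \mathcal{V}_l$. Differentiating and substituting the single-integrator model \eqref{agent model} together with the control protocol \eqref{control protocol for leaders}, the feed-forward terms $\dot{x}_i^*, \dot{y}_i^*, \dot{z}_i^*$ cancel exactly, leaving the closed-loop error dynamics
\[
\dot{e}_i(t) = -\tanh\big(e_i(t)\big),
\]
where $\tanh(\cdot)$ is applied coordinatewise. In particular each scalar component of each $e_i$ obeys the same autonomous scalar ODE $\dot{\epsilon} = -\tanh(\epsilon)$, and the $3n_l$ components are mutually decoupled, so it suffices to establish global asymptotic stability of the origin for this scalar system.

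Next I would run a Lyapunov argument on the scalar system. Taking $V(\epsilon) = \tfrac{1}{2}\epsilon^2$ (or equivalently $\ln\cosh\epsilon$), which is positive definite and radially unbounded, its derivative along trajectories is $\dot{V} = -\epsilon\tanh(\epsilon)$. Since $\epsilon\tanh(\epsilon) > 0$ for all $\epsilon \neq 0$ and vanishes only at $\epsilon = 0$, $\dot{V}$ is negative definite, so by Lyapunov's direct method the origin is globally asymptotically stable; equivalently one may invoke LaSalle's invariance principle, the largest invariant set contained in $\{\dot{V} = 0\} = \{\epsilon = 0\}$ being the origin itself. Applying this componentwise yields $e_i(t) \to 0$ as $t \to \infty$ for every leader, hence $\lim_{t\to\infty}\big(p_l(t) - p_l^*(t)\big) = 0$, which is exactly the control objective \eqref{control objective for leaders}.

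The argument is essentially routine; the only two points that need care — and thus the ``main obstacle'' such as it is — are (i) verifying that the feed-forward cancellation is exact, so that no residual disturbance enters the error dynamics, which relies on the leaders having access to $p_i^*$ and its time derivative as granted by Assumption~2; and (ii) observing that, although the saturating $\tanh$ nonlinearity precludes a global exponential rate, the strictly decreasing Lyapunov function still forces asymptotic convergence from any initial condition. If a rate estimate were desired one could additionally note that $\tanh(\epsilon) \approx \epsilon$ near the origin, giving locally exponential behavior, but this is not needed for the stated claim. Since the result is quoted directly from \cite{b17}, I would present only this short self-contained argument rather than reproving anything further.
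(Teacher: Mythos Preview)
Your argument is correct. The paper does not actually prove this lemma: it is stated as Lemma~3 of \cite{b17} and immediately followed by the follower control protocol, with no proof given in the present manuscript. So there is nothing to compare against beyond the citation itself.

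Your self-contained derivation --- forming the leader tracking error, observing that the feed-forward $\dot p_i^*$ cancels under \eqref{agent model} and \eqref{control protocol for leaders} to leave the decoupled scalar dynamics $\dot\epsilon=-\tanh(\epsilon)$, and closing with the quadratic Lyapunov function $V=\tfrac12\epsilon^2$ whose derivative $-\epsilon\tanh(\epsilon)$ is negative definite --- is exactly the standard route and is what one would expect the cited reference to contain. Your remarks on Assumption~2 (leaders have access to $p_i^*$ and $\dot p_i^*$) and on the saturation giving only asymptotic rather than exponential convergence are accurate and well placed. Nothing further is needed.
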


\textbf{Control Protocol for the Followers}. Under \hyperref[assumption 1]{Assumption~1}, the followers can utilize constant weights in $W_{ff}$ and $W_{fl}$ to achieve formation maneuver control without knowing the time-varying parameters $T(t), k(t), R_\zeta(t)$ in \eqref{all target configuration}.
The distributed control protocol for followers is given by
\begin{equation}
    \label{control protocol for followers}
    \Dot{p}_i = \gamma_i^{-1}\sum_{j\in\mathcal{N}_i}w_{ij}[\alpha(p_i - p_j) - \Dot{p}_j], i \in \mathcal{V}_f,
\end{equation}
where ${\gamma_i} = \sum_{j\in\mathcal{N}_i}w_{ij}$ represents the sum of weights for agent $i$'s neighbors, and $\alpha \in \mathbb{R^{+}}$ is a positive control gain parameter. In matrix form, \eqref{control protocol for followers} is expressed as
\begin{equation}
    \label{control protocol for followers in matrix}
    W_{ff}\Dot{p}_f + W_{fl}\Dot{p}_l = -\alpha(W_{ff}p_f + W_{fl}p_l).
\end{equation}

\begin{lemma}
    The followers group achieves their control objective \eqref{control objective for followers} under control protocol \eqref{control protocol for followers in matrix}.
\end{lemma}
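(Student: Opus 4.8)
The plan is to turn the stacked follower dynamics into a decoupled, exponentially stable linear error system. I would introduce the follower error $\delta_f(t) \triangleq p_f(t) + W_{ff}^{-1}W_{fl}p_l(t)$, which is exactly the quantity appearing in \eqref{control objective for followers}. Two structural facts make this work: the blocks $W_{ff}$ and $W_{fl}$ are assembled from the nominal weights \eqref{form of wij} and are therefore \emph{constant} in $t$, and $W_{ff}$ is non-singular by \hyperref[assumption 1]{Assumption 1}, so $\delta_f$ is well defined and $W_{ff}\delta_f = W_{ff}p_f + W_{fl}p_l$.

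Differentiating this identity, and using that $W_{ff}, W_{fl}$ are constant, gives $W_{ff}\dot{\delta}_f = W_{ff}\dot{p}_f + W_{fl}\dot{p}_l$; substituting the closed-loop protocol \eqref{control protocol for followers in matrix} on the right-hand side yields $W_{ff}\dot{\delta}_f = -\alpha(W_{ff}p_f + W_{fl}p_l) = -\alpha W_{ff}\delta_f$. Cancelling the non-singular $W_{ff}$ leaves the decoupled linear ODE $\dot{\delta}_f = -\alpha\delta_f$, hence $\delta_f(t) = e^{-\alpha t}\delta_f(0)$. Since $\alpha > 0$, $\delta_f(t) \to 0$ as $t \to \infty$, which is precisely \eqref{control objective for followers}; combined with the leader convergence result, this also yields $p_f(t) \to p_f^*(t)$, since $p_f^* = -W_{ff}^{-1}W_{fl}p_l^*$.

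The only point that needs care — and the main, rather mild, obstacle — is well-posedness of the closed loop, since the scalar protocol \eqref{control protocol for followers} contains the neighbors' velocities $\dot{p}_j$ on its right-hand side, which for a follower whose neighbor set includes other followers makes the equations implicitly coupled. I would dispense with this by observing that in stacked form \eqref{control protocol for followers in matrix} can be solved explicitly for $\dot{p}_f$, namely $\dot{p}_f = -W_{ff}^{-1}W_{fl}\dot{p}_l - \alpha\,\delta_f$, a genuine ODE driven by the known, bounded leader trajectory $p_l(t)$ and velocity $\dot{p}_l(t)$ produced by \eqref{control protocol for leaders}; standard existence and uniqueness then apply, and the error computation above goes through verbatim.
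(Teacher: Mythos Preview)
Your argument is correct and follows essentially the same route as the paper: define the follower error $e_f = p_f + W_{ff}^{-1}W_{fl}p_l$, use \eqref{control protocol for followers in matrix} together with the invertibility of $W_{ff}$ to obtain $\dot e_f = -\alpha e_f$, and conclude convergence. The paper finishes with the quadratic Lyapunov function $V=\tfrac12 e_f^\top e_f$ rather than writing the explicit solution $e^{-\alpha t}e_f(0)$, and it does not comment on the implicit coupling in \eqref{control protocol for followers}; your well-posedness remark is a welcome addition but not a different method.
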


\begin{proof}
    Let \( e_f \) represent the tracking error for followers, which is defined by
    \begin{equation*}
        e_f = p_f + W_{ff}^{-1}W_{fl}p_l.
    \end{equation*}
    Differentiating this equation gives
    \begin{equation*}
        \Dot{e}_f = \Dot{p}_f + W_{ff}^{-1}W_{fl}\Dot{p}_l = -\alpha(W_{ff}p_f + W_{fl}p_l) = -\alpha e_f.
    \end{equation*}
    Consider a Lyapunov function
    \begin{equation*}
        V = \frac{1}{2}e_f^{\top}e_f.
    \end{equation*}
    The time derivative of $V$ is
    \begin{equation*}
        \Dot{V} = e_f^{\top}\Dot{e}_f = -\alpha e_f^{\top}e_f.
    \end{equation*}
    Since $\Dot{V} < 0$ for $e_f \neq 0$, the desired conclusion holds.

\end{proof}


\subsection{Rotation Axis Adjustment}
Since the augmented Laplacian matrix is related to $\zeta$, the proposed augmented Laplacian method confines the formation to rotate around a given axis $\zeta$. Thus, a natural question is how to enable the formation to rotate with any orientations in 3-D space. To address this issue, we introduce a method for adjusting the rotation axis during the formation maneuver, while maintaining the invariance of the formation geometry.

First, at time \( t_1 \), the formation reaches the target configuration, i.e., \( p(t_1) = p^*(t_1) \). At this specific time \( t_1 \), the rotation axis is adjusted from \( \zeta \) to \( \zeta_{u} \). Next, based on the current formation \( (\mathcal{G}, p(t_1)) \) and rotation axis \( \zeta_{u} \), the augmented Laplacian matrix \( W_{u} \) must be reconstructed. Finally, in accordance with \eqref{all target configuration}, the target configuration \( p^*_{u}(t) \) for the subsequent formation maneuver is expressed as
\begin{equation}
    \label{target configuration after change rotation axis}
    p^*_{u}(t) = \alpha^*_{u}(t) + \beta^*_{u}(t),
\end{equation}
where $\alpha^*_{u}(t) = \mathbf{1}_n \otimes (r_{c_{u}} + T_{u}(t))$ and $\beta^*_{u}(t) = k_{u}(t)(I_n \otimes R_{\zeta_{u}}(t))(r_{u} - \mathbf{1}_n \otimes r_{c_{u}})$.

\begin{theorem}
    After adjusting the rotation axis $\zeta$ to $\zeta_{u}$ , the new target formation $(\mathcal{G}, p^*_{u}(t))$ is still similar to the nominal formation $(\mathcal{G}, r)$.
\end{theorem}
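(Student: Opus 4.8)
The plan is to reduce the claim to an application of \hyperref[theorem for similar]{Theorem 2}, by arguing that after the rotation-axis adjustment at time $t_1$ we are in exactly the same setting as in \eqref{all target configuration}, only with a different nominal formation and a different rotation axis. Concretely, the adjusted target configuration $p^*_u(t)$ in \eqref{target configuration after change rotation axis} has precisely the form of \eqref{all target configuration} with $r$ replaced by $r_u$ (the post-adjustment nominal configuration), $r_c$ replaced by $r_{c_u}$, $\zeta$ replaced by $\zeta_u$, and the time-varying parameters $T,k,R_\zeta$ replaced by $T_u,k_u,R_{\zeta_u}$. Since $W_u$ is, by construction, an augmented Laplacian matrix built from the axis $\zeta_u$ (its weights $w_{ij}^u$ all have the form \eqref{form of wij} with $\zeta$ replaced by $\zeta_u$), \hyperref[theorem for similar]{Theorem 2} applies verbatim and yields $W_u p^*_u(t) = 0$ for all $t$, i.e. $(\mathcal{G},p^*_u(t))$ is similar to the nominal formation $(\mathcal{G},r_u)$.

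The remaining — and genuinely load-bearing — step is to close the loop back to the \emph{original} nominal formation $(\mathcal{G},r)$: I must show that $(\mathcal{G},r_u)$ is itself a translated/scaled/rotated copy of $(\mathcal{G},r)$, so that ``similar to $(\mathcal{G},r_u)$'' is the same as ``similar to $(\mathcal{G},r)$.'' This is where the hypothesis $p(t_1)=p^*(t_1)$ is used. At time $t_1$ the actual configuration equals the old target configuration, which by \eqref{all target configuration} is $p(t_1) = \mathbf 1_n\otimes(r_c+T(t_1)) + k(t_1)(I_n\otimes R_\zeta(t_1))(r-\mathbf 1_n\otimes r_c)$ — i.e. $(\mathcal{G},p(t_1))$ is obtained from $(\mathcal{G},r)$ by the translation $T(t_1)$, scaling $k(t_1)$, and rotation $R_\zeta(t_1)$. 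The reconstruction of $W_u$ is performed from this configuration $(\mathcal{G},p(t_1))$, so the natural choice is to take $r_u = p(t_1)$ (with $r_{c_u}$ its centroid, which by \eqref{center of r} equals $r_c + T(t_1)$ since scaling and rotation about the centroid fix it). Then $r_u$ is a similarity image of $r$, and composing the two similarity transformations — the fixed one mapping $r\mapsto r_u$ and the time-varying one $r_u\mapsto p^*_u(t)$ — exhibits $p^*_u(t)$ as a translation, scaling, and rotation of $r$. Hence $(\mathcal{G},p^*_u(t))$ is similar to $(\mathcal{G},r)$, which is the assertion.

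One subtlety worth spelling out: ``similar'' in \hyperref[def of similar formation]{Definition 3} is defined relative to the constraint matrix, so strictly I should confirm that the similarity relation is transitive in the relevant sense — that a formation which is a translation/scaling/rotation of $(\mathcal{G},r)$ generates (via the construction \eqref{two neighbor construct}–\eqref{more than two neighbor construct}) an admissible augmented Laplacian matrix with the \emph{same} kernel structure, up to the coordinate change. This is essentially what \hyperref[commutative law]{Theorem 1} and \hyperref[theorem for similar]{Theorem 2} already give: any augmented Laplacian $W_u$ built for axis $\zeta_u$ annihilates every similarity image (about $\zeta_u$) of its own nominal configuration, and since $r$ and $r_u$ differ by a fixed similarity, the family of similarity images of $r_u$ coincides with the family of similarity images of $r$ (the group of translations/scalings/rotations being closed under composition). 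I would therefore organize the proof as: (i) identify $r_u$ with $p(t_1)$ and use $p(t_1)=p^*(t_1)$ together with \eqref{all target configuration} to express $r_u$ as a similarity image of $r$; (ii) apply \hyperref[theorem for similar]{Theorem 2} to the reconstructed system $(W_u, r_u, \zeta_u)$ to get $W_u p^*_u(t)=0$; (iii) compose the two transformations to conclude similarity to $(\mathcal{G},r)$. The main obstacle is step (i)/(iii) — making the bookkeeping of centroids and composed transformations precise — rather than any hard analysis; step (ii) is immediate from \hyperref[theorem for similar]{Theorem 2}.
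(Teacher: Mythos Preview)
Your proposal is correct and follows essentially the same approach as the paper: both arguments hinge on writing $r_u = p^*(t_1)$, computing $r_{c_u} = r_c + T(t_1)$, and then composing the fixed similarity $r\mapsto r_u$ with the time-varying similarity $r_u\mapsto p^*_u(t)$ to exhibit $p^*_u(t)$ in the form \eqref{all target configuration} relative to $r$ with composed parameters $T(t_1)+T_u(t)$, $k_u(t)k(t_1)$, and $R_{\zeta_u}(t)R_\zeta(t_1)$. The only difference is that the paper carries out this substitution directly and skips your step~(ii) (applying \hyperref[theorem for similar]{Theorem~2} to $W_u$), which is not actually needed since the fact that $p^*_u(t)$ is a similarity image of $r_u$ is immediate from the defining formula \eqref{target configuration after change rotation axis}.
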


\begin{proof}
    From \eqref{target configuration after change rotation axis}, we have 
    \begin{align*}
        &r_{u} = p^*(t_1) = \alpha^*(t_1) + \beta^*(t_1), \\
        &r_{c_{u}} = \frac{1}{n} \sum_{i \in \mathcal{V}} r_{{u}_i} = r_c + T(t_1) .
    \end{align*}
    Then, \eqref{target configuration after change rotation axis} is transformed to
    \begin{align*}
        p^*_{u}(t) &= \mathbf{1}_n \otimes (r_c + T(t_1) + T_{u}(t)) +  \\
        &\ k_{u}(t)k(t_1)(I_n \otimes R_{\zeta_{u}}(t)R_\zeta(t_1))(r - \mathbf{1}_n \otimes r_c),
    \end{align*}
    where $T(t_1) + T_{u}(t)$ is treated as $T_{u'}(t)$, $k_{u}(t)k(t_1)$ is treated as $k_{u'}(t)$ and $R_{\zeta_{u}}(t)R_\zeta(t_1)$ is treated as $R_{\zeta_{u'}}$. According to \hyperref[def of similar formation]{Definition 2}, the new target formation $(\mathcal{G}, p^*_{u}(t))$ is still similar to the nominal formation $(\mathcal{G}, r)$.
\end{proof}

\subsection{Dynamic Agent Reconfiguration}
In practical cases, the integration of new agents into a formation
is widely adopted, highlighting the significance of a scalable formation maneuver control method. In this subsection, we propose a method that allows new agents to seamlessly join the formation at any time for dynamic reconfiguration.

Specifically, the integration of a new agent into the formation involves the following steps. Initially, during the original formation maneuver, the new agent identifies its time-varying joining position. Upon determining this position, the agent utilizes the control law \eqref{control protocol for leaders} to reach its target position. Once the new agent arrives at the designated position, the augmented Laplacian matrix $W$ is reconstructed to accommodate the new agent while preserving the configuration of the original part. 
Ultimately, the integrated agent employs the distributed follower control protocol to maintain its relative position within the desired formation during subsequent maneuvers.

\begin{theorem}
After a new agent joins the formation, the method proposed above maintains the configuration of the original part, i.e., $p'_o = p$,
where $p'_o$ is the configuration after the addition of a new agent . 
\end{theorem}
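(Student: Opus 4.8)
The plan is to show that the augmented Laplacian matrix $W'$ rebuilt for the enlarged agent set $\mathcal{V}'=\mathcal{V}\cup\{n+1\}$ is block lower triangular, carrying the \emph{original} matrix $W$ in its leading principal block, and that this structure makes the closed-loop dynamics of the original $n$ agents completely independent of the newcomer.

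First I would make the reconstruction precise. When agent $n+1$ has reached its joining position, it is assigned a neighbor set $\mathcal{N}_{n+1}\subseteq\mathcal{V}$ of at least two agents, and weight matrices $w_{(n+1)j}$ of the prescribed form \eqref{form of wij} are selected so that the invariant constraint \eqref{general constraints} holds at the current configuration while $\gamma_{n+1}=\sum_{j\in\mathcal{N}_{n+1}}w_{(n+1)j}$ is non-singular. Feasibility of such a selection follows from the same rank--nullity argument used below \eqref{existence}: fixing two neighbors $j,k$ leaves six scalar unknowns constrained by three scalar equations, so a nontrivial admissible choice exists, and it can be taken with $\gamma_{n+1}$ non-singular. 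The key structural point is that the reconstruction adds only edges directed from $n+1$ to its neighbors; no original agent acquires $n+1$ as a neighbor. Hence rows $1,\dots,n$ of $W'$ equal the corresponding rows of $W$ with a zero $d\times d$ block appended in block column $n+1$, giving
\[
  W' = \begin{bmatrix} W & \mathbf{0}_{nd\times d} \\ \widetilde{w} & -\gamma_{n+1} \end{bmatrix},
\]
where $\widetilde{w}$ is the block row with $w_{(n+1)j}$ in block column $j$ for $j\in\mathcal{N}_{n+1}$ and zeros elsewhere.

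Treating the new agent as a follower, the enlarged follower set is $\mathcal{V}_f'=\mathcal{V}_f\cup\{n+1\}$, and the corresponding follower--follower block is
\[
  W'_{ff} = \begin{bmatrix} W_{ff} & \mathbf{0} \\ \ast & -\gamma_{n+1} \end{bmatrix},
\]
so $\det W'_{ff}=\det(W_{ff})\det(-\gamma_{n+1})\neq 0$ by \hyperref[assumption 1]{Assumption~1} and the choice of $\gamma_{n+1}$; the enlarged nominal formation thus stays localizable and the follower protocol \eqref{control protocol for followers in matrix} remains well posed. The decoupling argument now finishes the proof. The original leaders run \eqref{control protocol for leaders}, which never references agent $n+1$. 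Each original follower $i\in\mathcal{V}_f$ runs \eqref{control protocol for followers}, whose right-hand side depends only on $\mathcal{N}_i$, on the weights $\{w_{ij}\}_{j\in\mathcal{N}_i}$ and on $\gamma_i$ --- precisely the data left untouched when row and column $n+1$ are appended to $W$. Consequently the closed-loop system governing $(p_1,\dots,p_n)$ after the reconstruction coincides with the one governing it before; the original positions vary continuously through the reconstruction instant, and throughout its approach phase the newcomer moves under \eqref{control protocol for leaders} without being a neighbor of any original agent, so it never perturbs them. Therefore $p'_o = p$.

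I expect the main obstacle to be making the first step fully rigorous: showing that the newcomer's weight matrices can simultaneously satisfy the invariant constraint \eqref{general constraints} in the matrix form \eqref{form of wij} and keep $\gamma_{n+1}$ invertible, and establishing cleanly that the augmented Laplacian can be rebuilt without altering a single original entry (which is what places $W$ in the top-left block). Once these are secured, the remainder is block-triangular bookkeeping together with the locality of the leader and follower control laws.
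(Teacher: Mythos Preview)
Your argument is correct. Both you and the paper rest on the same structural fact---the reconstruction leaves the first $n$ block rows of the constraint matrix untouched---but you exploit it differently. The paper (indexing the newcomer first) writes $W'=W_{ex}+\Delta W$ with $W_{ex}=\operatorname{diag}(0_{d\times d},W)$ and argues at the \emph{equilibrium} level: since $\Delta W$ carries only the newcomer's constraint and is built to vanish on the joined configuration, $W'p'=0$ forces $W_{ex}p'=0$, hence $Wp'_o=0$; invertibility of $W_{ff}$ together with unchanged leader positions $p'_{ol}=p_l$ then pins down $p'_{of}=p_f$. You instead argue at the \emph{dynamical} level: because block rows $1,\dots,n$ of $W'$ equal those of $W$ padded with a zero block column, the control laws \eqref{control protocol for leaders}--\eqref{control protocol for followers} executed by the original agents are literally unchanged, so their closed-loop trajectories coincide for all $t$, not only at equilibrium. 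Your route is slightly stronger and does not invoke localizability of the original formation for the conclusion itself (you use it only to certify that $W'_{ff}$ is non-singular, i.e.\ well-posedness of the enlarged protocol); the paper's route, by contrast, isolates the null-space property of $W$ and would apply to any follower protocol that drives the system to the constraint set $Wp=0$.
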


\begin{proof}
After the integration of the new agent, the positions of all agents in the formation are represented by
\begin{equation}
p' = \begin{bmatrix}
p_{add} \\
p'_o
\end{bmatrix},
\end{equation}
where $p_{add}$ is the position of the added agent.

Then, the reconstructed augmented Laplacian matrix is presented in $W' = W_{ex} + \Delta W$, where $W'$ satisfies $W'p' = 0$, $W_{ex}$ represents the extended matrix of the original augmented Laplacian matrix $W$, with the specific form as
\begin{equation*}
W_{ex} = \begin{bmatrix}
0_{d \times d} & 0_{d \times nd} \\
0_{nd \times d} & W_{nd \times nd}
\end{bmatrix},
\end{equation*}
and $\Delta W$ represents the augmented Laplacian matrix exclusively concerning the added agent and its neighbors.
By constraints \eqref{general constraints}, considering the original configuration and the new agent, we have
\begin{equation}
    Wp = 0, W'p' = 0.
\end{equation}
Hence, $\Delta W$ must satisfy $\Delta Wp' = 0$. It follows that $W_{ex}p' = 0$, which further implies that $Wp'_o = 0$. Thus, we have
$$
W_{ff}p'_{of} + W_{fl}p'_{ol} = 0.
$$ 
Since the leaders' positions remain invariant during the integration of the new agent, i.e., $p'_{ol} = p_l$, it is concluded that the positions of the original followers in the formation also remain unchanged:
\begin{equation*}
p'_{of} = W_{ff}^{-1}W_{fl}p'_{ol} = W_{ff}^{-1}W_{fl}p_l = p_f.
\end{equation*}
Therefore, the integration of new agents maintains the invariance of the original formation configuration.
\end{proof}

\section{Relationships with 2-D Complex Laplacian Method}
\label{Section IV}

In this section, we show that the 3-D formation maneuver control method based on the augmented Laplacian matrix proposed in this paper can be treated as an extension of the 2-D complex Laplacian method \cite{b11,b15}.

In the complex Laplacian approach, the complex positions $p$ and weights $w_{ij}$ are expressed as $ p = x+y \iota $ and $w_{ij} = a_{ij} + b_{ij}\iota$, respectively, where \( x, y, a, b \in \mathbb{R} \) and \( \iota \) is the imaginary unit with \( \iota^2 = -1 \). For any position $p \in \mathbb{C}$, the weights $w_{ij}$ acting on $p$ are expressed as
\begin{equation}
    \label{wij p in complex}
    w_{ij}p = (a_{ij}x-b_{ij}y) +  (b_{ij}x + a_{ij}y)\iota.
\end{equation}
 
In 2-D plane rotations, all transformations occur around the Z-axis, which allows the 2-D complex Laplacian method to be considered as a special case of the proposed augmented Laplacian approach under rotations about the axis $\zeta = [0,0,1]^\top$. To align with the complex Laplacian method, we adjust the order of coefficients in equation \eqref{form of wij}
\begin{align}
    \label{form of wij in 2d space}
    w_{ij} &=  a_{ij} I_d + c_{ij} \zeta\zeta^\top + b_{ij} \zeta^{\times} \nonumber\\
           &=    \begin{bmatrix}
                a_{ij} & -b_{ij} & 0 \\
                b_{ij} & a_{ij} & 0 \\
                0 & 0 & a_{ij}+c_{ij}
                \end{bmatrix}.
\end{align}

Weights $w_{ij}$ constructed in \eqref{form of wij in 2d space} have the same property as complex weights. First, for any position $p = [x,y,0]^\top$ in 2-D space, the action of weights $w_{ij}$ on $p$ is
\begin{equation}
    \label{wij p in matrix}
    w_{ij} p = [a_{ij}x-b_{ij}y, b_{ij}x + a_{ij}y, 0]^\top,
\end{equation}
which is equivalent to the action \eqref{wij p in complex} in complex domain.
Additionally, by \hyperref[commutative law]{Theorem 1}, weights constructed in the proposed augmented Laplacian method also commute with rotation matrices $R_\zeta$, which is trivial for the complex Laplacian method due to the property of scalars. 

Therefore, our proposed method is also available in 2-D space. Let $R_z(t)$ be the rotation matrix around the Z-axis. Then the target configuration  $p^*$ is expressed as follows,
\begin{equation}
    p^*(t) = \mathbf{1}_n \otimes (r_c + T(t)) + k(t)(I_n \otimes R_z(t))(r - \mathbf{1}_n \otimes r_c).
\end{equation}
By \hyperref[theorem for similar]{Theorem 2}, the target formation $(\mathcal{G}, p^*)$ is similar to its nominal formation, i.e., $Wp^* = 0$. 
Therefore, the complex Laplacian method is equivalent to the method proposed in this paper in 2-D space, both in terms of the construction of weights and the representation of the target formation.

Notably, the key difference between the 2-D and 3-D scenarios lies in the form of the weights $w_{ij}$. For 2-D cases, the construction of weights $w_{ij}$ is simplified to the expression in \eqref{form of wij in 2d space}. In spite of this, the theorems and control protocols proposed in Sec.~\hyperref[Section III]{III} can be directly applied to 2-D cases without any modification.

\section{Simulation}
\label{Section V}
We will present two simulation examples in both 2-D and 3-D space. The multi-agent systems in 2-D and 3-D space both consist of three followers $\mathcal{V}_f=\{1,2,3\}$ and two leaders $\mathcal{V}_l=\{4,5\}$ shown in Fig.~\ref{fig:image3} and Fig.~\ref{fig:image1}, where the trajectories of the followers and leaders are marked in blue and red, and the trajectory of the new agent in 3-D simulation is marked in green. In Fig.~\ref{fig:image3} and Fig.~\ref{fig:image1}, the formations connected by gray lines are non-target formations, while the formations connected by green lines are target formations. The red arrows in Fig.~\ref{fig:image1} represent the rotation axis.

\subsection{Analysis of 2-D Simulation Results}
In the 2-D simulation, the nominal configuration $r = [r_1^\top,...,r_5^\top]^\top$ is designed as
\begin{align*}
    &r_1 = [0.5, 0.5, 0]^\top, \ r_2 = [0.5, -0.5, 0]^\top, r_3 = [0, 0, 0]^\top,\\  &r_4 = [-1, 1, 0]^\top, r_5 = [-1, -1, 0]^\top.
\end{align*}
with the positions of agents represented in 3-D to directly use the 3-D method proposed in this paper.

\begin{figure}[htbp]
    \centering
    \includegraphics[width=0.48\textwidth]{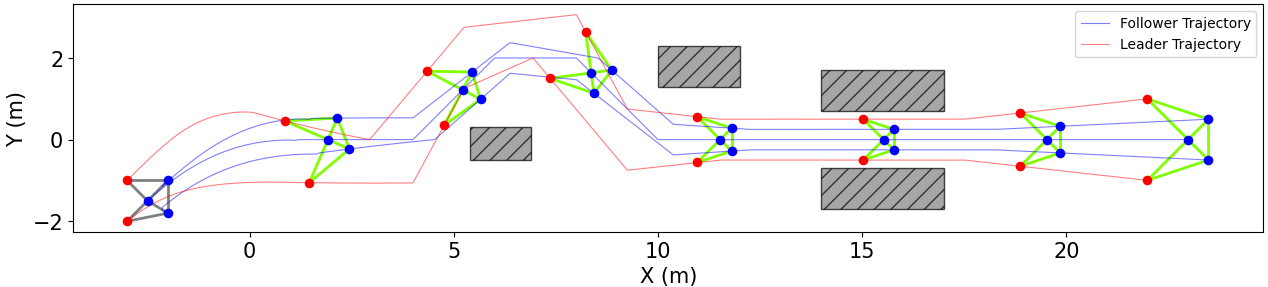}
    \caption{Trajectories in 2-D space (The dark areas stand for obstacles.)}
    \label{fig:image3}
\end{figure}

In this simulation, we verify the feasibility of our method in 2-D space. As shown in Fig.~\ref{fig:image3}, the 2-D formation can freely translate, scale and rotate to avoid obstacles. As shown in Fig.~\ref{fig:image4}, the control protocols we designed ensure that the formation quickly converges to the target formation.

\begin{figure}[htbp]
    \centering
    \includegraphics[width=0.48\textwidth]{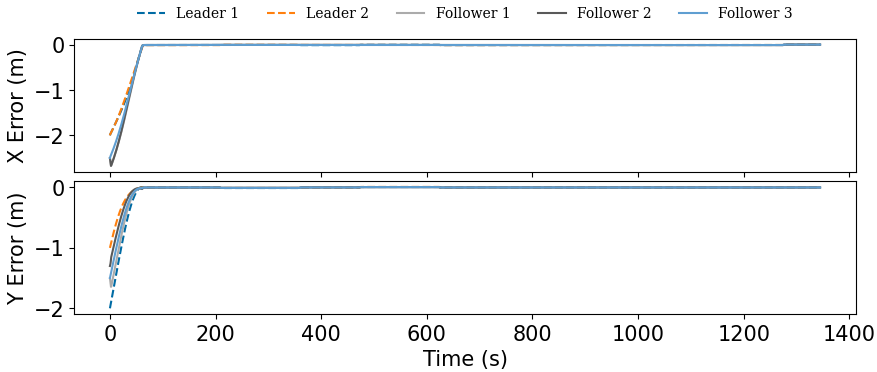}
    \caption{Tracking errors in 2-D space.}
    \label{fig:image4}
\end{figure}

\subsection{Analysis of 3-D Simulation Results}
In the 3-D simulation, the nominal configuration $r = [r_1^\top,...,r_5^\top]^\top$ is designed as
\begin{align*}
    &r_1 = [0.05, 0, 1]^\top, \ r_2 = [-0.05, 0, -1]^\top, r_3 = [1, \sqrt{3}, 0.05]^\top,\\ & r_4 = [1, -\sqrt{3}, -0.05]^\top, r_5 = [-2, 0, 0]^\top.
\end{align*}
Note that since there are only two leaders, the line connecting them must not be parallel to the axis of rotation.

\begin{figure}[htbp]
    \centering
    \includegraphics[width=0.45\textwidth]{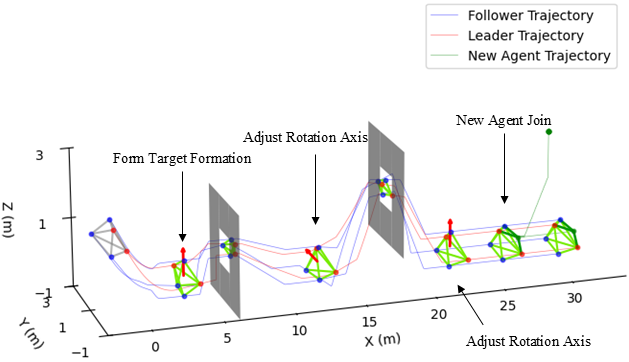}
    \caption{Trajectories in 3-D space (The dark areas stand for obstacles.)}
    \label{fig:image1}
\end{figure}

In this simulation, as shown in Fig.~\ref{fig:image1}, we not only verify that, based on the proposed control method, the formation is capable of translating, scaling and rotating around any axis in 3-D space, but also show that it is possible for the formation to realize reconfiguration.


Fig.~\ref{fig:image2} presents that tracking errors in each axis for both the agents in the formation and the new agent converge to zero quickly, which shows the effectiveness of our proposed control protocols.

\begin{figure}[htbp]
    \centering
    \includegraphics[width=0.48\textwidth]{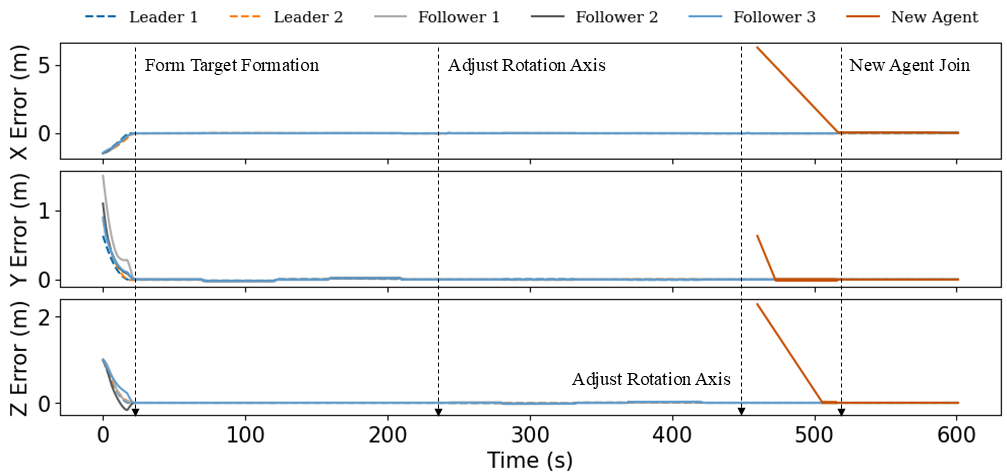}
    \caption{Tracking errors in 3-D space.}
    \label{fig:image2}
\end{figure}

Overall, the simulations demonstrate the effectiveness of our proposed formation maneuver control method in both 2-D and 3-D space. A key advantage of our method is the capacity to perform arbitrary rotations, which allows for complex maneuvers. Furthermore, our method achieves formation maneuverability with only two leaders in 3-D space. Additionally, the proposed method does not require convex or generic configurations, providing greater flexibility in formation design. These features collectively contribute to the robustness and versatility of our formation maneuver control method.

\section{Conclusions}
\label{Section VI}
This work introduces a general formation maneuver control method in both 2D and 3D space based on a novel constraint called augmented Laplacian matrix. The proposed method enables the formation to translate, scale, and rotate with any orientations. Moreover, for practical applications, we propose the dynamic agent reconfiguration approach, which allows new agents to join the formation, while maintaining the original configuration. Eventually, we give the relationships between our proposed method and the 2-D complex Laplacian method, which can be considered as a special case of our approach.

\vspace{12pt}

\end{document}